\documentclass[%
 reprint,
 superscriptaddress,
 nofootinbib,
 nobibnotes,
 amsmath,amssymb,
 aps,
 prl,
 floatfix,
]{revtex4-2}

\usepackage{amsthm}
\usepackage{graphicx}
\graphicspath{{figs/}}
\usepackage{dcolumn}
\usepackage{bm}
\usepackage{mathtools}
\usepackage{xcolor}
\usepackage[colorlinks=true,linkcolor=blue,citecolor=blue,urlcolor=blue]{hyperref}

\newcommand{\Sss}{S_{\mathrm{ss}}}
\newcommand{\Firr}{F_{\mathrm{irr}}}
\newcommand{\Tr}{\operatorname{Tr}}
\newcommand{\RR}{\mathbb{R}}
\newcommand{\E}{\mathbb{E}}
\DeclareMathOperator{\diag}{diag}

\newcommand{\Ie}{I^{\!\star}}

\newcommand{\plsection}[1]{\par\medskip\noindent\textit{#1.}---\,\ignorespaces}
\newcommand{\plsubsection}[1]{\par\smallskip\noindent\textit{#1.}---\,\ignorespaces}

\newtheorem{lemma}{Lemma}
\newtheorem{remark}{Remark}

\begin{document}

\preprint{APS/123-QED}

\title{Parity Selection Rule for Information and Dissipation in Driven Steady States}%

\author{Mengqi Li}
\affiliation{%
 School of Electronics and Information, Northwestern Polytechnical University, Xi'an, Shaanxi 710129, China%
}%

\author{Lixin Li}%
 \thanks{Contact author: lilixin@nwpu.edu.cn}%
\affiliation{%
 School of Electronics and Information, Northwestern Polytechnical University, Xi'an, Shaanxi 710129, China%
}%

\author{Wensheng Lin}%
\affiliation{%
 School of Electronics and Information, Northwestern Polytechnical University, Xi'an, Shaanxi 710129, China%
}%

\author{Zhu Han}
\affiliation{%
 Department of Electrical and Computer Engineering, University of Houston, Houston, TX 77004, USA%
}%

\date{\today}

\begin{abstract}
Tight equalities between symmetric information and entropy production in driven steady states remain elusive. We show that they are forbidden by a parity selection rule for rotation-driven linear nonequilibrium steady states. Whenever the relaxation and diffusion matrices commute, the snapshot mutual information between two time slices is exactly even under drive reversal, and parity violation rises linearly in the commutator norm when alignment is broken. Full isotropy strengthens this to drive-independence, and the planar mutual information takes the closed-form value of about 0.145 nats. Under the same alignment, the entropy production is exactly quadratic in the drive, and its prefactor admits an explicit closed form in the traces and determinant of the two matrices. The orthogonality of even and odd sectors leaves only one-sided thermodynamic-uncertainty bounds. The rule rests on the rotational symmetry of the drift alone and survives heavy-tailed isotropic stable noise with tail index below two, where variance-based bounds become vacuous. A falsifiable test is proposed on an electrical Brownian gyrator augmented for independent drive control with circuit-level stable-noise injection.
\end{abstract}

\maketitle

\plsection{Introduction}%
The conversion of one-sided information-thermodynamic inequalities into tight equalities would identify entropy production with an information cost in a single relation, sharpening Maxwell-demon accounts of feedback and dissipation~\cite{Horowitz2014,Ito2013,Sagawa2013,Seifert2012,Parrondo2015,Sagawa2008,Sagawa2010,Mandal2012,Hartich2014,Allahverdyan2009} and the Szil\'ard--Landauer bounds verified directly in experiment~\cite{Landauer1961,Bennett1982,Toyabe2010,Berut2012}. These statements sit within the stochastic-thermodynamic framework of fluctuation theorems and entropy-production identities~\cite{Jarzynski1997,Crooks1999,Sekimoto1998,Seifert2005,Esposito2010,Lebowitz1999,Schnakenberg1976}. A recent identity due to Cho and Jeong~\cite{Cho2025} achieves such a tight equality for a time-antisymmetric information measure, but for the natural symmetric information content of a steady state no tight equality has been established. The obstruction has been read as a missing construction. This Letter argues that it follows from a single symmetry of the driven dynamics.

For a rotation-driven linear nonequilibrium steady state, reversal of the drive frequency $\omega \to -\omega$ sorts every coordinate-free steady-state functional into a parity-even and a parity-odd sector, and the two are orthogonal. The symmetric snapshot mutual information between two time slices lives in the even sector and is, under structural alignment of the relaxation and diffusion matrices, exactly even in $\omega$; full isotropy strengthens this to exact $\omega$-independence at the closed-form value $-(n/2)\log(1-e^{-2at})$. The entropy production is \emph{odd-rooted}, with directional content time-antisymmetric in the standard path-action sense, and at the isotropic point it reduces to the closed-form quadratic $\Sigma = 2\omega^2/a$. No tight equality can equate an even information functional with an odd-rooted entropy production at every $\omega$, and what symmetry permits is only the one-sided bound carried by thermodynamic uncertainty relations~\cite{Barato2015,Gingrich2016,Horowitz2017,Hasegawa2019,Koyuk2020,Falasco2020,Manikandan2020,Skinner2021,Seifert2012}. We call this the \textit{parity selection rule}.

The variance-based thermodynamic uncertainty relation breaks down outside the Gaussian regime. For isotropic $\alpha$-stable driving with tail index $\gamma \in (0, 2)$, the circulation observable that probes the dissipative current inherits the heavy tail and has infinite variance, so the variance bound has a divergent left-hand side and is vacuous~\cite{Vo2025}. Under full isotropy, the parity rule remains exact across the same regime, because it is a symmetry of the drift rather than a moment of the noise; a sub-Gaussian construction controls finiteness through the logarithmic moments of the stable subordinator. The construction settles finiteness and $\omega$-independence of the snapshot mutual information without requiring a finite stationary covariance, while the absolute scale of the entropy production in the heavy-tailed regime requires a tempering regularization of the jump kernel. The rule's substance is this survival across the variance-vacuous regime, not the elementary even-odd orthogonality itself.

Parity throughout this Letter refers to the parity of a steady-state functional under reversal of the drive frequency, distinct from the time-reversal parity of dynamical variables in the sense of Refs.~\cite{Lee2013,Kim2022}. A concurrent result by Knight, Kaveh, and Pruessner~\cite{Knight2026} establishes a parallel constraint on the dissipation side, treated as a companion below. The Letter sets up the driven OU model and the parity object, proves the rule and its no-merger corollary, carries it into the heavy-tailed regime in which the variance-based bound is vacuous, and proposes a falsifiable test on an electrical Brownian gyrator with circuit-level $\alpha$-stable noise.

\plsection{Setup}%
The canonical model studied throughout is the $n$-dimensional driven Ornstein-Uhlenbeck (OU) process
\begin{equation}\label{eq:OU}
dx_t \;=\; -(A - \Omega)\,x_t\,dt \;+\; dN_t,
\end{equation}
in which $A = A^\top \succ 0$ is the symmetric positive-definite relaxation matrix encoding the reversible core and $\Omega = -\Omega^\top$ is the antisymmetric rotational drive. We focus on the planar case $n=2$, in which the rotational drive takes the canonical form $\Omega = \omega J$ with $J = \begin{psmallmatrix}0 & -1 \\ 1 & 0\end{psmallmatrix}$, and the reversal $\omega \to -\omega$ is the parity operation organizing the symmetry statements that follow.

The model is not a toy. The electrical autonomous Brownian gyrator of Ref.~\cite{Chiang2017} is a direct realization of Eq.~(\ref{eq:OU}) in $n=2$, with state variables identified as the two capacitor voltages and the antisymmetric drift produced by the inter-branch coupling capacitor under a temperature gradient, and the explicit parameter dictionary linking circuit elements to $(A, \Omega, D)$ is given below. The linearity of the equation is device physics, not a linearization.

The noise driving Eq.~(\ref{eq:OU}) is treated in two incarnations. In the Gaussian incarnation, $dN_t = \sigma\,dW_t$ with $W_t$ a standard Brownian motion, and the diffusion matrix $D = \tfrac12\,\sigma\sigma^\top \succ 0$ is finite. In the heavy-tailed incarnation, $dN_t = \sigma\,dL^\gamma_t$ with $L^\gamma$ an isotropic $\alpha$-stable process of tail index $\gamma \in (0,2)$ whose increments have no finite variance. The parity selection rule is stated in the Gaussian setting and carried unchanged into the heavy-tailed regime, where its symmetry origin makes the carry-over immediate.

A distinction must be drawn at the outset between two algebraic conditions on $A$, $D$, $\Omega$ that have been routinely conflated. \emph{Detailed balance}, the reversibility criterion $B D = D B^\top$ with $B = A - \Omega$~\cite{Risken1989,Godreche2019}, fixes a single $\Omega$ for each $(A, D)$ at which the entropy production vanishes; within the $[A,D]=0$ class this is $\Omega = 0$. The condition $[A,D] = 0$ that drives the rule below, by contrast, leaves $\Omega$ and the entropy production $\Sigma$ entirely free. A reversible system inside $[A,D]=0$ is a single point; the content of the parity rule lies in the genuinely driven slice $\Omega \neq 0$ on which $\Sigma > 0$.

Within this driven slice the Gaussian incarnation of Eq.~(\ref{eq:OU}) admits a zero-mean Gaussian stationary distribution with covariance $\Sss$ solving the Lyapunov equation
\begin{equation}\label{eq:lyap}
(A-\Omega)\,\Sss \;+\; \Sss\,(A-\Omega)^\top \;=\; 2D.
\end{equation}
The propagator over a lag $t$ is $\Phi(t) = e^{-(A-\Omega)t}$, and the joint covariance of two time slices $(x_0, x_t)$ is the $2n \times 2n$ block matrix with diagonals $\Sss$ and off-diagonals $\Phi(t)\,\Sss$. Every rotation-invariant (coordinate-free) symmetric information functional, in particular the snapshot mutual information $I(x_0;x_t)$, is a scalar log-determinant or spectral functional of this joint covariance. The argument of the next section turns on how $\Sss$ and $\Phi(t)$ co-transform under $\omega \to -\omega$, not on how either transforms separately: the parity argument acts on the joint covariance, not on its marginal blocks.

\plsection{Parity selection rule}%
The rule applies under the structural alignment $[A,D] = 0$, with full isotropy $A = a I$, $D = D_0 I$ yielding strengthened versions of each statement below. It has three pieces: an even branch for the symmetric information content, an odd-rooted branch for the entropy production, and a corollary in which their orthogonality forbids any tight identity.

\plsubsection{Even branch}%
The cleanest case is the fully isotropic one, with $A = a I$ and $D = D_0 I$. Substituting the ansatz $\Sss = c I$ into Eq.~(\ref{eq:lyap}), the cross terms involving $\Omega$ cancel by antisymmetry ($\Omega + \Omega^\top = 0$), and the equation collapses to $2 a c\,I = 2 D_0 I$, so
\begin{equation}\label{eq:Sss-iso}
\Sss \;=\; \frac{D_0}{a}\,I \quad \text{for every } \Omega.
\end{equation}
The stationary covariance is independent of $\omega$ outright. With propagator $\Phi(t) = e^{-a t}\,e^{\Omega t}$ and orthogonality $e^{\Omega t}\,(e^{\Omega t})^\top = I$, the conditional covariance simplifies to $\Sss - \Phi(t)\Sss\Phi(t)^\top = (D_0/a)(1 - e^{-2 a t})\,I$, and the rotation drops out a second time. The snapshot mutual information of two time slices is therefore
\begin{equation}\label{eq:MI-iso}
I(x_0; x_t) \;=\; -\frac{n}{2}\,\log\bigl(1 - r^2\bigr), \qquad r \;=\; e^{-a t},
\end{equation}
exactly independent of $\omega$. For $n = 2$, $a = 1$, and unit lag $t = 1$,
\begin{equation}\label{eq:anchor}
\Ie \;=\; -\log\bigl(1 - e^{-2}\bigr) \;=\; 0.145413457869\ldots,
\end{equation}
the isotropic blindness anchor. The constant in Eq.~(\ref{eq:anchor}) is invariant across all rotational drives, and is reproduced to within $10^{-15}$ for $\omega \in \{0, 0.5, 1, 2, 5\}$ in direct numerical solution of the Lyapunov equation. The End Matter gives the full derivation and the numerical reproduction.

Beyond the isotropic case, the alignment $[A,D]=0$ alone fixes the parity of every coordinate-free symmetric information functional. Simultaneous diagonalization makes the reversal $\omega \to -\omega$ an orthogonal reflection that conjugates the stationary covariance and the propagator with the same matrix, so the joint covariance of $(x_0, x_t)$ transforms by an orthogonal block and every log-determinant or spectral functional of it is exactly even. When $[A,D]\neq 0$, parity-evenness breaks with violation $O(\|[A,D]\|)$. The End Matter carries out the full proof and the perturbative breaking statement.

The two statements just established differ in strength and should not be conflated. The alignment $[A,D]=0$ delivers parity but not $\omega$-independence; full isotropy is strictly stronger and pins the functional to the closed-form anchor of Eq.~(\ref{eq:anchor}). Coordinate-dependent quantities, such as the mutual information of a single fixed-axis component, are not invariant under isotropy and can depend on $\omega$ even when every rotation-invariant scalar does not.

\plsubsection{Odd branch}%
The steady-state entropy production rate of a linear NESS is~\cite{Tome2006,Godreche2019}
\begin{equation}\label{eq:EP-def}
\begin{aligned}
\Sigma &\;=\; \Tr\!\bigl[\Firr^\top D^{-1} \Firr\,\Sss\bigr], \\
\Firr &\;=\; -(A - \Omega) + D\,\Sss^{-1},
\end{aligned}
\end{equation}
with $\Firr$ the irreversible component of the drift. Under isotropy $\Sss = (D_0/a) I$, so $D \Sss^{-1} = a I$, and the relaxation pieces cancel,
\begin{equation}\label{eq:Firr-iso}
\Firr \;=\; -(a I - \Omega) + a I \;=\; \Omega.
\end{equation}
The irreversible force is precisely the antisymmetric rotational drive, and flips sign exactly under $\omega \to -\omega$. Substituting Eq.~(\ref{eq:Firr-iso}) into Eq.~(\ref{eq:EP-def}) gives
\begin{equation}\label{eq:Sigma-iso}
\Sigma \;=\; \frac{1}{a}\,\Tr[\Omega^\top \Omega] \;=\; \frac{2\,\omega^2}{a} \quad (n = 2).
\end{equation}
The closed form Eq.~(\ref{eq:Sigma-iso}) is reproduced to within $10^{-15}$ in direct numerical solution of the Lyapunov equation. When $[A,D] = 0$ in $n=2$, $\Sigma$ vanishes if and only if $\Omega = 0$ and is exactly quadratic in $\omega$,
\begin{equation}\label{eq:Sigma-aligned}
\Sigma \;=\; \frac{(\Tr D)^2}{\Tr A \cdot \det D}\,\omega^2,
\end{equation}
reducing to Eq.~(\ref{eq:Sigma-iso}) at full isotropy; the derivation is in the End Matter. When $[A,D] \neq 0$, $\Sigma$ is non-negative throughout but is generically positive at $\omega = 0$, with the misalignment itself driving a stationary current, and parity is no longer exact.

The entropy production is called \emph{odd-rooted} in two senses that coincide for this model. Its directional content is the time-antisymmetric component of the path action~\cite{Maes2020,Spinney2016}, the established sense, and it flips sign under the drive reversal $\omega \to -\omega$, the present sense. The two coincide whenever $[A,D]=0$: the antisymmetric drive $\Omega$ is the source of path-action time-asymmetry, so its sign flip under $\omega \to -\omega$ flips the time-antisymmetric component. The coincidence is explicit at isotropy where $\Firr = \Omega$, and closes the even-odd orthogonality into a selection rule. The two senses do not coincide in general, and we keep them distinct throughout. The directional content of dissipation carries the sign of $\omega$, so it is odd, while the magnitude $\Sigma$ is even and isotropically quadratic. Second-law content lives at order $\omega^2$, and symmetric-information content lives at order $\omega^0$.

\begin{figure}[t]
\centering
\includegraphics[width=\linewidth]{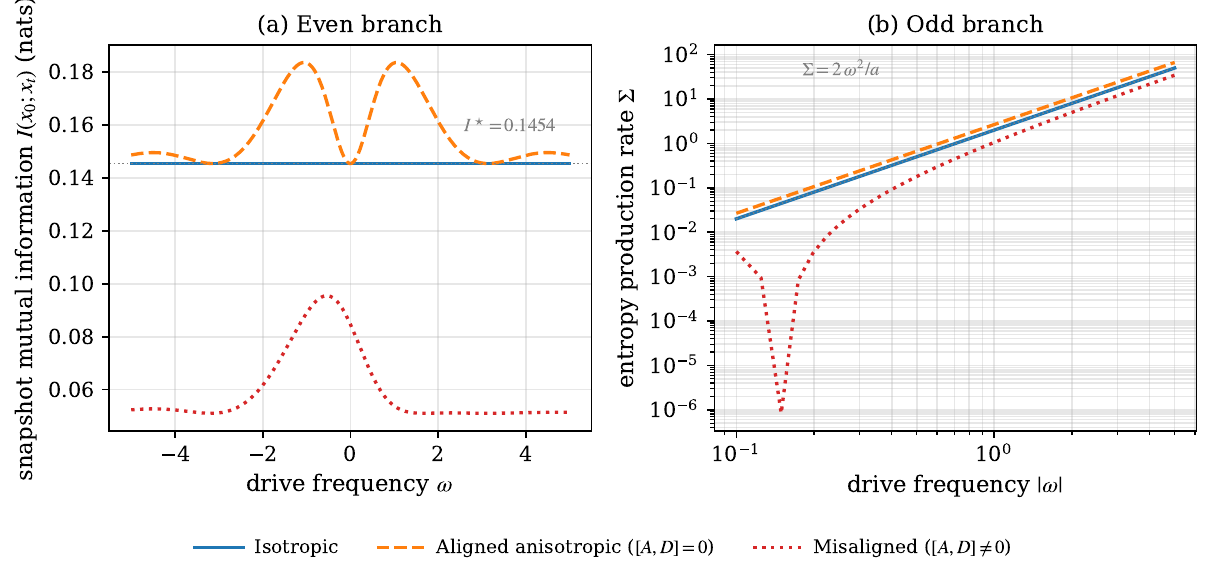}
\caption{\label{fig:even-odd}Even and odd branches at $\gamma=2$ from direct numerical solution of Eq.~(\ref{eq:lyap}), with $n=2$, $a=1$, $t=1$. Configurations: \emph{Isotropic} $A=I$, $D=I$ (blue, solid); \emph{Aligned} $A=I$, $D=\mathrm{diag}(1,3)$, $[A,D]=0$ (orange, dashed); \emph{Misaligned} $A=\mathrm{diag}(1,2)$, $D=I+0.3\,\sigma_x$ with $\sigma_x$ the off-diagonal Pauli matrix (red, dotted). (a) Snapshot MI $I(x_0;x_t)$ vs $\omega$. (b) Entropy production $\Sigma$ vs $|\omega|$, log-log; gray dotted reference $\Sigma=2\omega^2/a$.}
\end{figure}

\plsubsection{Corollary}%
The graded statements of the even branch and the parity flip of the odd branch close into a no-go. Assume $[A,D]=0$, and let $\mathcal F_{\mathrm{even}}$ be any symmetric information measure and $\mathcal F_{\mathrm{odd}}$ the antisymmetric component of the directed information rate. No non-trivial identity $\mathcal F_{\mathrm{even}}(\omega) = \mathcal F_{\mathrm{odd}}(\omega)$ can hold for every $\omega$, since parity forces
\begin{align}\label{eq:parity-chain}
\mathcal F_{\mathrm{even}}(\omega)
  &= \mathcal F_{\mathrm{even}}(-\omega) = \mathcal F_{\mathrm{odd}}(-\omega) \notag \\
  &= -\mathcal F_{\mathrm{odd}}(\omega) = -\mathcal F_{\mathrm{even}}(\omega),
\end{align}
so $\mathcal F_{\mathrm{even}} \equiv 0$, contradicting non-triviality. The tight merger is blocked by symmetry.

What parity does not forbid is a one-sided inequality. At the isotropic point both sides are closed form. The snapshot mutual information is the constant $\Ie = -(n/2)\log(1 - e^{-2 a t})$ of Eq.~(\ref{eq:MI-iso}), and the entropy production is the quadratic $\Sigma = 2\omega^2/a$ of Eq.~(\ref{eq:Sigma-iso}). Any inequality of the form $\Ie \le c\,\Sigma$ with finite $c$ holds only for $|\omega|$ above a threshold and fails as $|\omega| \to 0$, with the ratio $\Ie/\Sigma \to \infty$. The $\omega^0$-versus-$\omega^2$ parity mismatch appears explicitly as a decoupling rather than a tight law. The selection rule forbids the equality ``information $=$ dissipation'', while permitting one-sided bounds in which dissipation enters through its even square, of the kind organized by thermodynamic uncertainty relations.

FIG.~\ref{fig:even-odd} confirms the closed-form predictions of this section by direct numerical solution of the Lyapunov equation in the three configurations the rule distinguishes. The isotropic case sits on the blindness anchor $\Ie \approx 0.1454$~nats exactly across the entire drive sweep. The aligned anisotropic case with $[A,D]=0$ shows parity-evenness without blindness, the snapshot MI now an even function of $\omega$ that varies with $|\omega|$. The misaligned case with $[A,D]\neq 0$ visibly breaks parity, with the snapshot MI no longer symmetric about $\omega = 0$. The entropy production panel reproduces $\Sigma = 2\omega^2/a$ to within $10^{-15}$ in the isotropic case, with the $\Sigma \propto \omega^2$ slope persisting under $[A,D]=0$ anisotropy and breaking down under misalignment, where $\Sigma$ approaches a finite misalignment-driven floor at $\omega = 0$. FIG.~\ref{fig:nogo} converts the corollary above into two direct diagnostics: the ratio $I/\Sigma$ diverges as $|\omega| \to 0$ in both $[A,D]=0$ configurations, ruling out any inequality $\Ie \le c\,\Sigma$ with finite $c$; and the parity violation $|I(\omega) - I(-\omega)|$ sits at the floating-point floor under $[A,D]=0$, thirteen decades below the misaligned case.

\begin{figure}[!tbp]
\centering
\includegraphics[width=\linewidth]{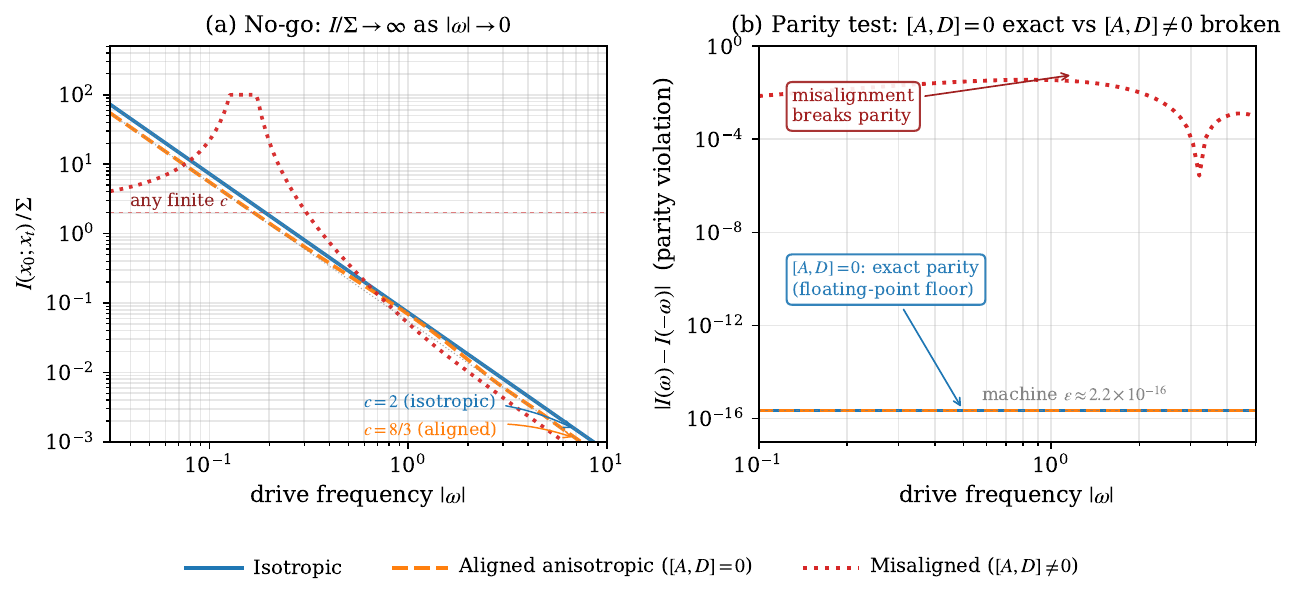}
\caption{\label{fig:nogo}No-go diagnostic, configurations as in FIG.~\ref{fig:even-odd}. (a) Ratio $I/\Sigma$ vs $|\omega|$, log-log; gray references $\Ie/(c\,\omega^2)$ at $c=2$ (isotropic) and $c=8/3$ (aligned), misaligned curve capped at $10^2$. (b) Parity violation $|I(\omega)-I(-\omega)|$ vs $|\omega|$, log-log; horizontal reference at machine $\epsilon\approx 2.2\times 10^{-16}$.}
\end{figure}

\plsection{Heavy-tailed regime}%
The even-ness established under $[A,D]=0$ was obtained from the orthogonality of the rotation $e^{\Omega t}$, and never from any moment of the noise. This is why the rule survives heavy tails.

For isotropic $\alpha$-stable noise with tail index $\gamma \in (0, 2)$, the increments of $L^\gamma_t$ are invariant in distribution under every rotation $R \in SO(n)$, by the sub-Gaussian representation $L \,\stackrel{d}{=}\, \sqrt{S}\,G$ in which $S$ is a positive $(\gamma/2)$-stable subordinator and $G$ is standard Gaussian. The rotation acts only on the Gaussian factor, and the radial part is invariant. Under full isotropy $A = a I$, $D = D_0 I$, passing to the co-rotating frame $y_t = R(-\omega t)\,x_t$ converts Eq.~(\ref{eq:OU}) with isotropic $\alpha$-stable driving into a rotation-free isotropic OU--L\'evy process for $y_t$, and any rotation-invariant symmetric-information functional of the lab-frame process equals, by construction, that of the rotation-free process, and so is independent of $\omega$ by the same symmetry argument that delivered the Gaussian case, without ever requiring $\Sss$ to be finite. The sub-Gaussian construction in the End Matter proves finiteness and $\omega$-independence of the snapshot mutual information in the heavy-tailed regime on its own footing.

The variance-based thermodynamic uncertainty relation fares differently. The relation lower-bounds the relative variance $\mathrm{Var}(J_T)/\langle J_T\rangle^2$ of any steady-state current $J_T$ by $2/\Sigma$ at long observation time. The current relevant to the parity rule is the circulation observable
\begin{equation}\label{eq:Theta}
\Theta_T \;=\; \int_0^T \bigl(x_1(s)\,dx_2(s) - x_2(s)\,dx_1(s)\bigr),
\end{equation}
a functional of the trajectory increments that inherits the heavy tail of the driving noise. For $\gamma < 2$ it has tail index $\gamma/2$ and infinite variance~\cite{Spinney2016}. The left-hand side of the variance bound is therefore $+\infty$, and the inequality is vacuous in this regime. On the dissipation side, the entropy production rate must be defined for $\gamma < 2$ through the path-measure relative entropy with the nonlocal jump kernel of the $\alpha$-stable generator, not through the Gaussian quadratic form Eq.~(\ref{eq:EP-def}). The absolute value of $\Sigma$ is not claimed to be finite without a tempering regularization, and that caveat is stated rather than hidden. What \emph{is} a symmetry statement, and is therefore cutoff-independent, is the parity structure under $[A,D] = 0$: $\Omega = 0 \Leftrightarrow \Sigma = 0$, monotonicity in $|\omega|$, and the small-$\omega$ scaling $\Sigma \propto \omega^2$.

The two separate cleanly. The selection rule remains exact precisely where the variance-based bound becomes vacuous, because the rule depends only on the rotational symmetry of the drift, while the bound depends on the second moment of a current. No $\gamma < 2$ fractional-moment numerical values are used anywhere in the analysis. The heavy-tailed claims of the present section rest only on the symmetry argument above and on the Gaussian anchors Eqs.~(\ref{eq:anchor}) and~(\ref{eq:Sigma-iso}).

\plsection{Experimental proposal}%
The most direct platform for testing the parity selection rule is the electrical autonomous Brownian gyrator of Chiang \emph{et al.}~\cite{Chiang2017}, two capacitively coupled RC branches whose resistors sit at temperatures $T_1 \neq T_2$. The temperature gradient sustains a stationary gyrating current in the $(V_1, V_2)$ plane of the two capacitor voltages, the linear-NESS realization studied since Refs.~\cite{Filliger2007,Dotsenko2013}.

The device is a literal realization of Eq.~(\ref{eq:OU}) in $n = 2$. The equations of motion are linear,
\begin{equation}\label{eq:gyrator}
\hat R\,\hat C\,\dot V \;=\; -V + \xi,
\end{equation}
with $\hat R = \diag(R_1, R_2)$ the resistance matrix and $\hat C$ the capacitance matrix carrying the inter-branch coupling $-C_c$ off-diagonally. The state variables are the two capacitor voltages $V = (V_1, V_2)^\top$; the off-diagonal coupling combined with the resistance imbalance $R_1 \neq R_2$ makes the antisymmetric part of the effective drift $\hat C^{-1}\hat R^{-1}$ nonzero, and the steady state is Gaussian with covariance solving the Lyapunov equation~(\ref{eq:lyap}). The native noise is Gaussian Johnson-Nyquist,
\begin{equation*}
\langle \xi_i(t)\,\xi_j(t')\rangle \;=\; 2\,k_B T_i R_i\,\delta_{ij}\,\delta(t - t').
\end{equation*}
Throughout this section $\Omega$ denotes the antisymmetric drive of the rule, set in practice by augmenting the base gyrator with an externally controlled non-reciprocal element (discussed below).

The gyration rate in the $(V_1, V_2)$ plane scales as $\langle \dot\phi\rangle \propto \epsilon$ with $\epsilon = 2 k_B R_1 R_2 C_c (T_2 - T_1)$, jointly controlled by the coupling capacitor $C_c$ and the temperature contrast~\cite{Chiang2017}. The entropy production is measurable redundantly, both from the trajectory work $Q_i = \oint V_i\,dq_i$ and from the closed-form resistive heat rate
\begin{equation}\label{eq:gyrator-heat}
\langle \dot Q\rangle \;=\; \frac{C_c^2\,k_B(T_2 - T_1)}{\det \hat C \cdot \mathrm{Tr}(\hat R \hat C)},
\end{equation}
the latter following from the Lyapunov solution. Both readouts have been cross-checked against experiment in Ref.~\cite{Chiang2017}, so the dissipation prediction $\Sigma = 2\omega^2/a$ of Eq.~(\ref{eq:Sigma-iso}) has an unambiguous independent measurement target.

Two layers of testable claims should be separated. The strongest claim, the exact $\omega$-blindness of Eq.~(\ref{eq:MI-iso}), is the full isotropic prediction and requires operating near the symmetric point $R_1 \approx R_2$ with the externally controlled $\Omega$ specified below. A weaker device-generic claim is the parity split itself, the prediction that small-$\omega$ mutual information stays at order $\omega^0$ while the entropy production rises as $\omega^2$, so that their ratio diverges as $|\omega| \to 0$. This decoupling survives moderate anisotropy, because the leading even and odd orders are unchanged. The experiment can therefore verify the $\omega^0$-versus-$\omega^2$ mismatch at generic device parameters and approach exact blindness by tuning toward the symmetric point.

Access to the heavy-tailed regime is a single well-defined hardware composition rather than a thought experiment, and both ingredients exist. The first is the gyrator itself~\cite{Chiang2017}, the verified 2D-OU rotational NESS. The second is a circuit-level $\alpha$-stable noise generator with tunable stability index, demonstrated previously on a FitzHugh-Nagumo circuit~\cite{LevyCircuit2024}. The composition is a programmable current source, a digitally synthesized isotropic $\alpha$-stable sequence driving a wideband transconductance stage through a digital-to-analog converter, superposed on or replacing one branch's Johnson-Nyquist source. The injection enters at the noise port, the RC network providing the linear drift $A - \Omega$ is left untouched, the 2D-OU mapping is preserved, and only the tail index $\gamma$ of $dN_t$ changes.

The proposal has one binding design constraint. In the abstract SDE Eq.~(\ref{eq:OU}), the antisymmetric drive $\Omega$ enters the drift independently of the diffusion matrix $D$. In the thermal gyrator of Ref.~\cite{Chiang2017} without the augmentation, however, gyration is \emph{induced} by the bath asymmetry $T_1 \neq T_2$, so the bath-induced effective antisymmetric drift and the anisotropy of $D$ are kinematically coupled through the gyration-rate scaling $\epsilon = 2 k_B R_1 R_2 C_c (T_2 - T_1)$. Injecting an isotropic $\alpha$-stable noise of equal strength on both branches removes the very temperature asymmetry that drives gyration, sending this effective drift to zero together with $\Sigma \to 0$, and reduces the heavy-tailed experiment to an empty test. A clean realization of the rule's $\Omega$ knob---required for both the Gaussian blindness anchor and the heavy-tailed extension---therefore needs $\Omega$ to enter the drift through a mechanism independent of the bath, for instance via a nonreciprocal coupling element implementing $\Omega$ directly (a literal gyrator element in the sense of Tellegen~\cite{Tellegen1948}), or via active electronic feedback. Once $\Omega$ is sourced independently of $D$, isotropic L\'evy injection on both branches probes the parity rule in its pure form, with $\gamma$ tuned and $\Omega$ held fixed.

Three predictions are then sharp and falsifiable. The mutual information between the two capacitor voltages is even in $\omega$, and at the symmetric point exactly independent of $\omega$, with the Gaussian closed form of Eq.~(\ref{eq:MI-iso}) at the device's $at$. The entropy production is exactly $\propto \omega^2$ at the symmetric point and remains $O(\omega^2)$ generically near $|\omega| = 0$, with the prefactor of Eq.~(\ref{eq:Sigma-iso}). In the $\gamma < 2$ heavy-tailed regime the variance-based thermodynamic uncertainty relation breaks down, while the parity split between mutual information and entropy production is unmoved. Measurability is anchored independently. Mutual information and transfer entropy are routinely measured in driven colloidal nonequilibrium steady states~\cite{Colloidal2025}, and dissipation has been measured in the gyrator family at the level of single-trajectory work~\cite{Chiang2017,Filliger2007}.

\plsection{Discussion and conclusion}%
The identification of the directed information rate with the entropy production on this NESS is textbook stochastic thermodynamics~\cite{Horowitz2014,Ito2013,Sagawa2013,Spinney2016,Maes2020}. That odd half is not new in the present Letter. What is added is the structural complement: the even half stated as an information-theoretic proposition, that the symmetric information content is parity-even and isotropically blind to the drive; the packaging of the even and odd halves into a single selection rule whose orthogonality forbids the tight merger of information with the second law; the diagnosis, at the level of parities, of the structural obstruction to equality-type information thermodynamics, reducing to the $\omega^0$-versus-$\omega^2$ parity mismatch; and the extension of the rule into the heavy-tailed regime in which the variance bound is vacuous, by an argument that uses only symmetry and never any moment.

The selection rule does not contest the inequality-type information thermodynamics of Refs.~\cite{Horowitz2014,Ito2013,Sagawa2013,Seifert2012} in which entropy production bounds an information change from one side. Those inequalities are untouched and are in fact consistent with the loose one-sided bound permitted by the rule. What symmetry forbids is the tight, equality-type merger, not the inequality.

The Letter's contribution separates cleanly from several concurrent and adjacent threads. The work of Ref.~\cite{Entropy2024NEEnhance}, sometimes summarized as ``information enhanced by driving'', measures a different object, a nonequilibrium drive that reshapes the stationary law, whereas the divergence-free rotation of the present setup leaves the stationary law invariant by Eq.~(\ref{eq:Sss-iso}). The inverse-TUR no-go of Ref.~\cite{Vo2025} is not triggered, because the variance divergence is induced by the input statistics rather than by a vanishing relaxation gap, and the parity structure of the entropy production is cutoff-free. The entropic uncertainty relation of Ref.~\cite{Hasegawa2025}, the coherent-transport TUR of Ref.~\cite{BrandnerSaito2025}, and the mutual-information-rate decompositions of Refs.~\cite{Cho2025,Zeng2017} all operate on the odd branch of the information-entropy decomposition, and none of them states the even and blind half as an information-theoretic proposition or packages the parity rule. Information-geometric reformulations of NESS transitions~\cite{Ito2020,Lacerda2025} treat dissipation through Riemannian geodesics in parameter space, a direction adjacent to but distinct from the symmetry-based analysis here.

A concurrent result on the same methodological footing has recently appeared. Knight, Kaveh, and Pruessner~\cite{Knight2026} establish that the parity and time-reversal symmetries of a hidden self-propulsion variable fix the partial entropy production of active particles~\cite{Fodor2016}, with non-trivial entropy production entering only at sixth order in the self-propulsion speed. Their result and the parity selection rule are both instances of symmetry-based constraints on nonequilibrium steady-state functionals, on the dissipation side and on the symmetric information side, respectively, and are read here as companion statements rather than as competitors.

\plsection{Acknowledgments}%
This work was supported in part by the National Natural Science Foundation of China under Grants 62571450 and 62101450; the Key Research and Development Program of Shaanxi under Grants 2025CY-YBXM-043 and 2025CG-GJHX-15; the Shanghai Academy of Spaceflight Technology under Grant SAST2025-037; the Open Fund of Intelligent Control Laboratory; the Open Fund of Key Laboratory of Radio Spectrum Testing Technology (The State Radio Monitoring Center Testing Center), Ministry of Industry and Information Technology.

\bibliographystyle{apsrev4-2}
\bibliography{refs}


\clearpage
\onecolumngrid

\begin{center}
\textbf{\large END MATTER}
\end{center}

\vspace{1em}

\setcounter{section}{0}
\renewcommand{\thesection}{S\arabic{section}}
\setcounter{equation}{0}
\renewcommand{\theequation}{S\arabic{equation}}
\setcounter{figure}{0}
\renewcommand{\thefigure}{S\arabic{figure}}

\noindent This End Matter gives complete proofs of every analytic statement made in the main text, and lays out the foundational lemmas the proofs rely on. The notation is identical to the main text. The model is the $n$-dimensional driven Ornstein--Uhlenbeck (OU) process
\begin{equation}\label{eq:S-OU}
dx_t \;=\; -(A - \Omega)\,x_t\,dt \;+\; dN_t,
\end{equation}
with $A = A^\top \succ 0$ symmetric positive definite, $\Omega = -\Omega^\top$ antisymmetric, and $N_t$ a noise process with covariance generator $2D$, $D = D^\top \succ 0$. The Gaussian-noise stationary covariance $\Sss$ obeys the Lyapunov equation
\begin{equation}\label{eq:S-lyap}
(A - \Omega)\,\Sss \;+\; \Sss\,(A - \Omega)^\top \;=\; 2 D,
\end{equation}
with propagator $\Phi(t) = e^{-(A - \Omega) t}$. We focus on the planar case $n = 2$ with $\Omega = \omega J$ and $J = \begin{psmallmatrix}0 & -1 \\ 1 & 0\end{psmallmatrix}$. The steady-state entropy production rate of the Gaussian linear NESS is
\begin{equation}\label{eq:S-EP-def}
\Sigma \;=\; \Tr\!\bigl[\Firr^\top D^{-1} \Firr\,\Sss\bigr], \qquad
\Firr \;=\; -(A - \Omega) + D\,\Sss^{-1}.
\end{equation}

\section{Even-branch proof under $[A,D]=0$}\label{app:even}

\subsection{Foundational lemmas}\label{app:prelim}

The proofs in this and the following appendices rely on four foundational facts that we record here. Three are stated immediately; the fourth (the trace identity) appears in Sec.~\ref{app:trace-id} where it is first used.

\begin{lemma}[Stability and Lyapunov existence]\label{lem:stability}
Let $A \in \RR^{n \times n}$ be symmetric positive definite and $\Omega \in \RR^{n \times n}$ antisymmetric. Then every eigenvalue of $A - \Omega$ has strictly positive real part, and the Lyapunov equation
\begin{equation}\label{eq:S-lyap-general}
(A - \Omega)\,\Sss + \Sss\,(A - \Omega)^\top = 2 D
\end{equation}
admits a unique positive-definite solution $\Sss$ for every symmetric positive-definite $D$. Explicitly,
\begin{equation}\label{eq:S-lyap-integral}
\Sss = 2 \int_0^\infty e^{-(A - \Omega) s}\, D\, e^{-(A - \Omega)^\top s}\, ds.
\end{equation}
\end{lemma}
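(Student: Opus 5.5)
The proof has three steps: a spectral bound on $B = A - \Omega$, convergence and verification of the integral formula, and uniqueness.

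\emph{Spectral bound.} Take any eigenpair $Bv = \lambda v$ with $v \in \mathbb{C}^n$ normalized so that $v^* v = 1$. Then $\lambda = v^* A v - v^* \Omega v$. Since $A$ is real symmetric, $v^* A v$ is real. It is at least $\lambda_{\min}(A) > 0$. Since $\Omega$ is real antisymmetric, $\Omega^* = -\Omega$, so $v^*\Omega v$ is purely imaginary. Hence $\operatorname{Re}\lambda = v^* A v \ge \lambda_{\min}(A) > 0$.

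The same computation gives a quantitative contraction that avoids Jordan blocks. For any real $y$, $\frac{d}{ds}\|e^{-Bs}y\|^2 = -2\,y_s^\top A\, y_s \le -2\lambda_{\min}(A)\|y_s\|^2$, where $y_s = e^{-Bs}y$. The $\Omega$ contribution drops out because $y_s^\top \Omega y_s = 0$. Therefore $\|e^{-Bs}\| \le e^{-\lambda_{\min}(A)s}$, and the same bound holds for $e^{-B^\top s}$.

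\emph{Existence and positivity.} Define $\Sss$ by Eq.~(\ref{eq:S-lyap-integral}). The contraction bound makes the integrand $O(e^{-2\lambda_{\min}(A)s})$, so the integral converges absolutely. The integrand is symmetric, so $\Sss$ is symmetric. Differentiate $e^{-Bs}De^{-B^\top s}$ in $s$ to get $-(B\,e^{-Bs}De^{-B^\top s} + e^{-Bs}De^{-B^\top s}B^\top)$. Integrating from $0$ to $\infty$, the boundary term at infinity vanishes by the decay bound. This yields $B\Sss + \Sss B^\top = 2D$.

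For positive-definiteness, take any nonzero $y$. Then
\[
y^\top \Sss y = 2\int_0^\infty \bigl(e^{-B^\top s}y\bigr)^\top D\,\bigl(e^{-B^\top s}y\bigr)\,ds .
\]
The integrand is strictly positive for every $s$, because $e^{-B^\top s}$ is invertible and $D \succ 0$. Hence $y^\top \Sss y > 0$.

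\emph{Uniqueness.} Suppose $\Sigma_1$ and $\Sigma_2$ both solve the equation, and let $X = \Sigma_1 - \Sigma_2$. Then $BX + XB^\top = 0$, which implies $\frac{d}{ds}\bigl(e^{-Bs}Xe^{-B^\top s}\bigr) = 0$. So $e^{-Bs}Xe^{-B^\top s} = X$ for all $s$. Letting $s \to \infty$ and using the decay bound gives $X = 0$.

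An alternative route to uniqueness is the standard Sylvester criterion. The operator $X \mapsto BX + XB^\top$ has eigenvalues $\lambda_i + \lambda_j$, and these have positive real part by the spectral bound, so none is zero and the operator is invertible. Either argument works; I would use the flow argument to keep the proof self-contained.

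\emph{Main obstacle.} The only delicate point is controlling $e^{-Bs}$ when $B$ is not normal and may fail to be diagonalizable. The energy estimate above settles this without Jordan-form bookkeeping. The noise process plays no role in the statement; only the generator matrix $D$ enters.
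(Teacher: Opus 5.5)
Your proof is correct, and its skeleton matches the paper's. The spectral step is the same Rayleigh-quotient argument for $\Re\lambda>0$. Existence uses the same integral formula, checked by differentiating the integrand. Positivity uses the same argument from $D\succ 0$ and invertibility of the exponentials.

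Where you differ is the analytic core, in two places:
\begin{itemize}
\item \emph{Convergence.} The paper gets convergence of the integral directly from spectral stability (``converges by stability''). This tacitly uses that a Hurwitz matrix generates an exponentially decaying semigroup, which for non-normal $A-\Omega$ needs a Jordan-form or resolvent argument. You instead exploit that the symmetric part of $B=A-\Omega$ is $A\succ 0$. The energy identity $\frac{d}{ds}\|e^{-Bs}y\|^2=-2\,y_s^\top A\,y_s$ then gives the explicit contraction $\|e^{-Bs}\|\le e^{-\lambda_{\min}(A)s}$.
\item \emph{Uniqueness.} The paper invokes Sylvester's theorem, since the spectra of $A-\Omega$ and $-(A-\Omega)^\top$ are disjoint. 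You derive it from the invariance $e^{-Bs}Xe^{-B^\top s}=X$ combined with that decay.
\end{itemize}

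Each route buys something different. Yours is fully self-contained and gives a quantitative decay rate, and it disposes of the Jordan-block issue you flagged without any canonical-form bookkeeping. The paper's existence-and-uniqueness step is shorter and uses only that $A-\Omega$ is Hurwitz, so it would transfer unchanged to any stable drift. Your contraction estimate instead requires the symmetric part of the drift to be positive definite, which is strictly stronger than stability. Under the lemma's hypotheses both arguments are complete.
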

\begin{proof}
Let $\lambda \in \mathbb{C}$ be an eigenvalue of $A - \Omega$ with right eigenvector $v \in \mathbb{C}^n \setminus \{0\}$. Then
\[
v^* (A - \Omega) v \;=\; v^* A v \;-\; v^* \Omega v \;=\; \lambda\,v^* v.
\]
Since $A \succ 0$ is real symmetric, $v^* A v \in \RR_{>0}$. Since $\Omega$ is real antisymmetric, $v^* \Omega v \in i\RR$. Hence
\[
\Re \lambda \;=\; \frac{v^* A v}{v^* v} \;>\; 0,
\]
proving stability. The Sylvester operator $X \mapsto (A - \Omega) X + X (A - \Omega)^\top$ is invertible because the spectra of $A - \Omega$ and $-(A - \Omega)^\top$ are disjoint (one in the open right half-plane, the other in the open left half-plane). The integral~\eqref{eq:S-lyap-integral} converges by stability of $A - \Omega$, satisfies the Lyapunov equation by direct differentiation under the integral, and is the unique solution. It is positive definite because $D \succ 0$ and the conjugating exponentials are nonsingular.
\end{proof}

\begin{lemma}[Reflection generating the parity flip in $n=2$]\label{lem:reflection}
Let $J = \begin{psmallmatrix}0 & -1 \\ 1 & 0\end{psmallmatrix}$ and $R = \diag(1, -1)$. Then $R$ is involutory ($R^2 = I$), symmetric ($R^\top = R$), and orthogonal with $\det R = -1$. Moreover:
\begin{enumerate}
\item[(i)] $R J R = -J$;
\item[(ii)] For any diagonal $A = \diag(a_1, a_2)$ and $D = \diag(d_1, d_2)$, $R A R = A$ and $R D R = D$.
\end{enumerate}
\end{lemma}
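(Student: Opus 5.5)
The plan is a direct computation with $2\times 2$ matrices, verifying each property of $R=\diag(1,-1)$ in turn. Since $R$ is diagonal with entries $\pm 1$, it is its own transpose and squares to $I$. Hence $R^\top R = R^2 = I$, so $R$ is orthogonal, and $\det R = (1)(-1) = -1$. So $R$ is an improper orthogonal transformation, a reflection across the first coordinate axis. That is why it can reverse orientation, and hence the sense of rotation.

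For (i), I compute $RJR$ entrywise. For any matrix $M$, conjugation by a diagonal sign matrix gives $(RMR)_{ij} = r_i r_j M_{ij}$ with $r = (1,-1)$. The diagonal entries are therefore unchanged, and the off-diagonal entries are multiplied by $r_1 r_2 = -1$. Since $J$ has zero diagonal and off-diagonal entries $-1$ and $1$, this gives $RJR = -J$. Conceptually, an orientation-reversing orthogonal map conjugates the rotation generator to its negative. Combined with $R^{-1}=R$, this yields $R\,e^{\omega J t}\,R = e^{-\omega J t}$, which is what the later parity argument will need.

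For (ii), the same entrywise formula applies to diagonal $A$ and $D$. Only the diagonal entries are present, and they are multiplied by $r_i^2 = 1$, so $RAR = A$ and $RDR = D$. Equivalently, diagonal matrices commute with $R$, and $R^2=I$ does the rest.

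There is no real obstacle here; the only point needing care is stating the entrywise conjugation rule $(RMR)_{ij} = r_i r_j M_{ij}$ once, so that (i) and (ii) follow uniformly. I would add a remark previewing the intended use. Under $[A,D]=0$, an orthogonal change of basis diagonalizes $A$ and $D$ simultaneously. In $n=2$, an orthogonal basis change with determinant $+1$ leaves $J$ invariant, and one with determinant $-1$ sends $J$ to $-J$ (absorbed by flipping the sign of $\omega$). The lemma then applies in that frame, so conjugation by $R$ maps the data $(A,D,\omega)$ to $(A,D,-\omega)$.
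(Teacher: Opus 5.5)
Your proposal is correct and takes essentially the same route as the paper, namely direct $2\times 2$ computation: the paper multiplies out $RJR$ explicitly for (i) and notes that a sign flip on axis $2$ fixes diagonal matrices for (ii). Your entrywise rule $(RMR)_{ij} = r_i r_j M_{ij}$ handles both parts with one formula, and your side remark that an orthogonal $U$ satisfies $U^\top J U = (\det U)\,J$ is also correct.
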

\begin{proof}
$R^2 = I$ and $R^\top = R$ are immediate. For (i),
\[
R J \;=\; \begin{pmatrix} 1 & 0 \\ 0 & -1 \end{pmatrix}\!\begin{pmatrix} 0 & -1 \\ 1 & 0 \end{pmatrix} \;=\; \begin{pmatrix} 0 & -1 \\ -1 & 0 \end{pmatrix},
\quad
(R J) R \;=\; \begin{pmatrix} 0 & -1 \\ -1 & 0 \end{pmatrix}\!\begin{pmatrix} 1 & 0 \\ 0 & -1 \end{pmatrix} \;=\; \begin{pmatrix} 0 & 1 \\ -1 & 0 \end{pmatrix} \;=\; -J.
\]
For (ii), conjugation by the sign flip on axis $2$ leaves any diagonal matrix invariant.
\end{proof}

\begin{remark}\label{rem:rotations-do-not-reverse}
Rotations $R(\theta) \in SO(2)$ do not reverse $J$. Since $J$ generates $SO(2)$, $R(\theta) J R(\theta)^\top = J$ for every $\theta$. The parity flip $\Omega \to -\Omega$ is achieved only by orientation-reversing elements of $O(2)$, of which $R = \diag(1,-1)$ is the canonical representative. This is the geometric reason the parity selection rule is a discrete-symmetry statement, not a continuous one.
\end{remark}

The mutual-information formula below is standard~\cite{Shannon1948,CoverThomas2006}.

\begin{lemma}[Gaussian mutual information]\label{lem:gaussian-MI}
Let $(X, Y) \in \RR^n \times \RR^n$ be jointly Gaussian with zero mean and block covariance
\[
\mathcal C \;=\; \begin{pmatrix} \Sigma_X & \Sigma_{XY} \\ \Sigma_{YX} & \Sigma_Y \end{pmatrix},
\]
with $\Sigma_X, \Sigma_Y \succ 0$. Then
\begin{equation}\label{eq:S-MI-logdet}
I(X; Y) \;=\; \tfrac{1}{2}\,\log\!\frac{\det \Sigma_X \cdot \det \Sigma_Y}{\det \mathcal C} \quad \text{(nats)}.
\end{equation}
\end{lemma}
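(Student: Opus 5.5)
The proof goes through the standard identity $I(X;Y) = h(X) + h(Y) - h(X,Y)$ together with the closed form of the differential entropy of a nondegenerate Gaussian. First I will verify that $\mathcal C$ is positive definite, so that all three entropies are finite. This is implicit in the statement, since otherwise $\det\mathcal C = 0$ and the formula reads $+\infty$, which is still consistent with $I = \infty$ for a degenerate Gaussian. I will treat the nondegenerate case $\mathcal C \succ 0$ as the main case and note the degenerate limit at the end.

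For a zero-mean Gaussian $Z\in\RR^m$ with covariance $K\succ 0$, I will compute
\[
h(Z) = -\E[\log p(Z)] = \tfrac12\log\bigl((2\pi)^m\det K\bigr) + \tfrac12\E[Z^\top K^{-1}Z].
\]
The last term equals $\tfrac12\Tr(K^{-1}K)=m/2$. This gives $h(Z)=\tfrac12\log\bigl((2\pi e)^m\det K\bigr)$. The marginals of a jointly Gaussian vector are Gaussian with the diagonal blocks as covariances, so I will apply this with $m=n$ to $X$ and $Y$ and with $m=2n$ to $(X,Y)$. In $h(X)+h(Y)-h(X,Y)$ the factors $(2\pi e)^n\cdot(2\pi e)^n/(2\pi e)^{2n}$ cancel, which leaves exactly Eq.~\eqref{eq:S-MI-logdet}.

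A check worth recording, because it is how the lemma is used later, is the Schur-complement form. It says
\[
\det\mathcal C=\det\Sigma_X\,\det\bigl(\Sigma_Y-\Sigma_{YX}\Sigma_X^{-1}\Sigma_{XY}\bigr),
\]
so that $I=\tfrac12\log\bigl[\det\Sigma_Y/\det\Sigma_{Y|X}\bigr]$. With $\Sigma_{YX}=\Phi(t)\Sss$ this is precisely the conditional-covariance expression that the main text uses to obtain Eq.~\eqref{eq:MI-iso}.

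No step is a real obstacle. The only point needing care is justifying $I = h(X)+h(Y)-h(X,Y)$. This follows from $I=\E\log\frac{p(X,Y)}{p(X)p(Y)}$, which requires the joint density to exist, that is, $\mathcal C\succ0$. In the stationary OU application this holds because Lemma~\ref{lem:stability} makes $\Sss\succ0$, and the conditional covariance $\Sss-\Phi\Sss\Phi^\top = 2\int_0^t e^{-(A-\Omega)s}De^{-(A-\Omega)^\top s}ds$ is also positive definite for $t>0$.
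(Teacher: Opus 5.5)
Your proposal is correct and follows the same route as the paper's proof: you substitute the Gaussian differential entropies into $I(X;Y)=h(X)+h(Y)-h(X,Y)$ and let the $(2\pi e)$ factors cancel. Your additions, which the paper leaves implicit, are sound: the requirement $\mathcal C\succ 0$ (with the singular case correctly giving $I=\infty$) and the Schur-complement form with $\Sigma_{Y|X}=\Sss-\Phi\Sss\Phi^\top\succ 0$.
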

\begin{proof}
By the Gaussian differential-entropy formula,
\[
h(X) \;=\; \tfrac{1}{2}\log\bigl[(2\pi e)^n \det \Sigma_X\bigr], \quad
h(Y) \;=\; \tfrac{1}{2}\log\bigl[(2\pi e)^n \det \Sigma_Y\bigr], \quad
h(X,Y) \;=\; \tfrac{1}{2}\log\bigl[(2\pi e)^{2n} \det \mathcal C\bigr].
\]
Then $I(X;Y) = h(X) + h(Y) - h(X,Y)$ collapses to Eq.~\eqref{eq:S-MI-logdet}.
\end{proof}

\subsection{Co-reflection identities}\label{app:co-reflect}

We work in the planar case $n=2$. The structural alignment $[A,D] = 0$ means $A$ and $D$ are simultaneously diagonalizable: there exists $U \in O(2)$ such that
\[
U^\top A\,U \;=\; \diag(a_1, a_2), \qquad U^\top D\,U \;=\; \diag(d_1, d_2).
\]
In the rotated basis $\tilde x = U^\top x$, the dynamics read
\[
d\tilde x \;=\; -(U^\top A U \;-\; U^\top \Omega U)\,\tilde x\,dt \;+\; U^\top dN_t,
\]
and the rotation-invariance of the noise statistics, a property shared by Brownian and isotropic stable processes alike, returns an OU-type SDE with diagonal $A$, diagonal $D$, and $\tilde \Omega = U^\top \Omega U$.

The reversal $\omega \to -\omega$ corresponds to $\Omega \to -\Omega$. In the rotated basis, where $A$ and $D$ are diagonal, the antisymmetric matrix $-\Omega$ is obtained from $\Omega$ by conjugation with the reflection $R = \diag(1, -1)$ of Lemma~\ref{lem:reflection}, which preserves the diagonal forms of $A$ and $D$ and reverses the orientation in the plane of $\Omega$. The propagator $\Phi(t) = e^{-(A - \Omega) t}$ then satisfies
\begin{equation}\label{eq:S-Phi-conjugate}
\Phi_{-\omega}(t) \;=\; R\,\Phi_{+\omega}(t)\,R^\top,
\end{equation}
since $R(A - \Omega)R^\top = A + \Omega$ and the matrix exponential is preserved under orthogonal conjugation.

Substituting $\Omega \to -\Omega$ into the Lyapunov equation~\eqref{eq:S-lyap}, multiplying on the left by $R$ and on the right by $R^\top$, and using $R A R^\top = A$, $R D R^\top = D$, $R \Omega R^\top = -\Omega$ (Lemma~\ref{lem:reflection}) shows that $\Sss(-\omega)$ obeys the same Lyapunov equation as $R\,\Sss(+\omega)\,R^\top$. Uniqueness of the positive-definite Lyapunov solution (Lemma~\ref{lem:stability}) gives
\begin{equation}\label{eq:S-Sss-conjugate}
\Sss(-\omega) \;=\; R\,\Sss(+\omega)\,R^\top.
\end{equation}
Together, Eqs.~\eqref{eq:S-Phi-conjugate} and~\eqref{eq:S-Sss-conjugate} are the co-reflection identities.

The joint covariance of $(x_0, x_t)$ is the $2n \times 2n$ block matrix
\[
\mathcal C(\omega) \;=\; \begin{pmatrix} \Sss(\omega) & \Phi(t)\,\Sss(\omega) \\ \Sss(\omega)\,\Phi(t)^\top & \Sss(\omega) \end{pmatrix}.
\]
Under $\omega \to -\omega$ each block transforms by $R$ on the left and $R^\top$ on the right, so
\begin{equation}\label{eq:S-jointcov-transform}
\mathcal C(-\omega) \;=\; (R \oplus R)\,\mathcal C(+\omega)\,(R \oplus R)^\top.
\end{equation}
Every spectral or log-determinant functional of $\mathcal C$ is invariant under such orthogonal conjugation, so $\mathcal F(\mathcal C(-\omega)) = \mathcal F(\mathcal C(+\omega))$ for every coordinate-free symmetric information functional $\mathcal F$, proving the parity statement of the even branch in the main text.

\subsection{Isotropic blindness anchor}\label{app:isotropy-anchor}

For the strictly stronger isotropic blindness, $A = a I$, $D = D_0 I$. Substituting the ansatz $\Sss = c I$ into Eq.~\eqref{eq:S-lyap}, the cross terms involving $\Omega$ cancel by antisymmetry ($\Omega + \Omega^\top = 0$), leaving $2 a c I = 2 D_0 I$, hence
\begin{equation}\label{eq:S-Sss-iso}
\Sss \;=\; \frac{D_0}{a}\,I \quad \text{for every } \Omega.
\end{equation}
With $\Phi(t) = e^{-a t}\,e^{\Omega t}$ and $e^{\Omega t}\,(e^{\Omega t})^\top = I$, the joint covariance is
\[
\mathcal C(\omega) \;=\; \frac{D_0}{a}\,\begin{pmatrix} I & e^{-a t}\,e^{\Omega t} \\ e^{-a t}\,e^{-\Omega t} & I \end{pmatrix}.
\]
The eigenvalues of $\mathcal C$ are $(D_0/a)(1 \pm e^{-a t})$, each with multiplicity $n$, and are independent of $\omega$. By Lemma~\ref{lem:gaussian-MI},
\[
I(x_0; x_t) \;=\; \tfrac{1}{2}\,\log\!\frac{\det \Sss \cdot \det \Sss}{\det \mathcal C} \;=\; -\tfrac{n}{2}\,\log\bigl(1 - e^{-2 a t}\bigr),
\]
which is exact $\omega$-independence at the closed-form value
\begin{equation}\label{eq:S-MI-iso}
I(x_0; x_t) \;=\; -\tfrac{n}{2}\,\log\bigl(1 - e^{-2 a t}\bigr).
\end{equation}
At $n = 2$, $a = 1$, $t = 1$,
\begin{equation}\label{eq:S-anchor}
\Ie \;=\; -\log\bigl(1 - e^{-2}\bigr) \;=\; 0.14541345786\ldots
\end{equation}
nats. Direct numerical solution of the Lyapunov equation followed by evaluation of the Gaussian mutual-information formula reproduces this value across the sweep $\omega \in \{0, 0.5, 1, 2, 5\}$ with variation $\le 10^{-15}$, consistent with floating-point round-off.

\subsection{Failure modes}\label{app:fail}

When $[A,D] \neq 0$, no single orthogonal $R$ achieves $R A R^\top = A$, $R D R^\top = D$, and $R \Omega R^\top = -\Omega$ simultaneously, so the identities~\eqref{eq:S-Phi-conjugate} and~\eqref{eq:S-Sss-conjugate} fail. Sec.~\ref{app:perturb} below shows that the residual scales linearly in $\|[A,D]\|$ for small misalignment, which is the sharp two-sided characterization referred to in the main text.

\section{Odd-branch identity, blindness anchor numerics, and heavy-tailed proof}\label{app:odd}

\subsection{Trace identity}\label{app:trace-id}

\begin{lemma}[Trace identity for OU steady states]\label{lem:trace-id}
For any OU process satisfying the hypotheses of Lemma~\ref{lem:stability}, the stationary covariance obeys
\begin{equation}\label{eq:S-trace-id}
\Tr[D\,\Sss^{-1}] \;=\; \Tr A.
\end{equation}
\end{lemma}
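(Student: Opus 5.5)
The identity follows from multiplying the Lyapunov equation~\eqref{eq:S-lyap-general} by $\Sss^{-1}$ and taking a trace. Lemma~\ref{lem:stability} guarantees that $\Sss$ exists and is positive definite, so $\Sss^{-1}$ is well defined. Write $B = A - \Omega$. Right-multiplying $B\Sss + \Sss B^\top = 2D$ by $\Sss^{-1}$ gives
\begin{equation*}
B + \Sss B^\top \Sss^{-1} = 2 D \Sss^{-1}.
\end{equation*}
I then take the trace of both sides. By cyclicity, $\Tr[\Sss B^\top \Sss^{-1}] = \Tr[B^\top] = \Tr B$. The left side is therefore $2\Tr B$, and $\Tr[D\Sss^{-1}] = \Tr B$.

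The second step removes the drive from the trace. Since $\Omega = -\Omega^\top$, its diagonal vanishes, so $\Tr \Omega = 0$ and $\Tr B = \Tr A - \Tr\Omega = \Tr A$. This yields Eq.~\eqref{eq:S-trace-id}, with no assumption on $[A,D]$ and for every $\Omega$. That generality is what makes the identity useful later: the term $D\Sss^{-1}$ in $\Firr$ can be traced out without solving the Lyapunov equation explicitly.

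There is no real obstacle. The only point needing care is justifying that $\Sss$ is invertible, and Lemma~\ref{lem:stability} supplies this directly. As a sanity check, I would verify the isotropic case: $\Sss = (D_0/a)I$ gives $\Tr[D\Sss^{-1}] = na = \Tr A$, which agrees with the identity.
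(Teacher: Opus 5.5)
Your proof is correct and matches the paper's argument: right-multiply the Lyapunov equation by $\Sss^{-1}$, use trace cyclicity to reduce $\Tr[\Sss B^\top \Sss^{-1}]$ to $\Tr B$, and drop $\Tr\Omega = 0$. The only difference is cosmetic: the paper expands $B = A - \Omega$ before taking the trace, while you keep $B$ intact until the last step.
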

\begin{proof}
The Lyapunov equation~\eqref{eq:S-lyap-general} reads
\[
A \Sss + \Sss A - \Omega \Sss + \Sss \Omega \;=\; 2 D
\]
(using $A^\top = A$ and $\Omega^\top = -\Omega$). Right-multiplying by $\Sss^{-1}$ and taking the trace,
\[
\Tr A + \Tr[\Sss A \Sss^{-1}] + \Tr[\Sss \Omega \Sss^{-1}] - \Tr \Omega \;=\; 2 \Tr[D \Sss^{-1}].
\]
By cyclic invariance of the trace, $\Tr[\Sss A \Sss^{-1}] = \Tr A$ and $\Tr[\Sss \Omega \Sss^{-1}] = \Tr \Omega$; and $\Tr \Omega = 0$ since $\Omega$ is antisymmetric. The identity~\eqref{eq:S-trace-id} follows.
\end{proof}

\noindent As a corollary, $\Tr \Firr = -\Tr(A - \Omega) + \Tr[D\Sss^{-1}] = -\Tr A + \Tr A = 0$, the volume-preserving form of the irreversible OU current.

\subsection{Closed form for the entropy production at the isotropic point}\label{app:Sigma-iso}

Under isotropy $\Sss = (D_0/a) I$, so $D\,\Sss^{-1} = a I$, and the irreversible force in Eq.~\eqref{eq:S-EP-def} reduces to
\[
\Firr \;=\; -(a I - \Omega) + a I \;=\; \Omega.
\]
Substituting into the entropy production rate Eq.~\eqref{eq:S-EP-def} and using $\Omega^\top = -\Omega$, $\Sss = (D_0/a) I$, $D^{-1} = (1/D_0) I$,
\[
\Sigma \;=\; \Tr\!\bigl[\Omega^\top D^{-1} \Omega\,\Sss\bigr] \;=\; \Tr\!\Bigl[\Omega^\top \cdot \tfrac{1}{D_0}\Omega \cdot \tfrac{D_0}{a} I\Bigr] \;=\; \frac{1}{a}\,\Tr\!\bigl[\Omega^\top \Omega\bigr].
\]
In the planar case $n = 2$ with $\Omega = \omega J$ and $J^\top J = I$, $\Tr[\Omega^\top \Omega] = \omega^2 \Tr[J^\top J] = 2\omega^2$, so
\begin{equation}\label{eq:S-Sigma-iso}
\Sigma \;=\; \frac{2\,\omega^2}{a}.
\end{equation}
The result rests on $A$ and $D$ being scalar multiples of the identity.

\subsection{Closed form under $[A,D]=0$ in $n=2$}\label{app:sigma-aligned}

Beyond the isotropic point, the structural alignment $[A,D]=0$ admits a closed form for $\Sigma$ that is exactly quadratic in $\omega$ for every aligned $(A,D)$ in $n=2$. In the simultaneous eigenbasis of $A$ and $D$, write
\[
A = \diag(a_1, a_2), \qquad D = \diag(d_1, d_2),
\]
with $\Omega = \omega J$ and $J$ as in Lemma~\ref{lem:reflection}. Define the symmetric functions
\begin{equation}\label{eq:S-aligned-symfuncs}
\sigma = a_1 + a_2 = \Tr A, \quad \pi = a_1 a_2 = \det A, \quad \delta = d_1 + d_2 = \Tr D, \quad \mu = d_1 d_2 = \det D, \quad \tau = d_1 a_2 - d_2 a_1.
\end{equation}

\paragraph{Lyapunov solution.}
Writing $\Sss = (s_{ij})$ with $s_{12} = s_{21}$, the three independent entries of the Lyapunov equation $(A - \Omega) \Sss + \Sss (A + \Omega) = 2D$ give the linear system
\begin{subequations}\label{eq:S-lyap-components}
\begin{align}
a_1 s_{11} + \omega s_{12} &\;=\; d_1, \label{eq:S-lyap-11}\\
(a_1 + a_2) s_{12} + \omega(s_{22} - s_{11}) &\;=\; 0, \label{eq:S-lyap-12}\\
- \omega s_{12} + a_2 s_{22} &\;=\; d_2. \label{eq:S-lyap-22}
\end{align}
\end{subequations}
Solving Eq.~\eqref{eq:S-lyap-11} and~\eqref{eq:S-lyap-22} for $s_{11}$ and $s_{22}$ in terms of $s_{12}$, substituting into Eq.~\eqref{eq:S-lyap-12}, and using $a_1 + a_2 = \sigma$, $a_1 a_2 = \pi$, $a_2 d_1 - a_1 d_2 = -\tau$ yields
\[
\sigma(\pi + \omega^2)\,s_{12} \;=\; \omega\,\tau,
\]
hence
\begin{equation}\label{eq:S-Sss-aligned}
s_{12} \;=\; \frac{\tau\,\omega}{\sigma\,(\pi+\omega^2)}, \qquad
s_{ii} \;=\; \frac{d_i\,\sigma\,\pi + a_i\,\delta\,\omega^2}{a_i\,\sigma\,(\pi+\omega^2)}, \quad i = 1, 2.
\end{equation}
The polynomial factor $\pi + \omega^2$ in every denominator is the structural feature that drives the cancellation below.

\paragraph{Reduction of $\Sigma$ via the trace identity.}
By Lemma~\ref{lem:trace-id}, $\Tr[D \Sss^{-1}] = \sigma$. Writing $\Firr = -B + D\Sss^{-1}$ with $B = A - \Omega$ and expanding Eq.~\eqref{eq:S-EP-def},
\[
\Sigma \;=\; \Tr[D \Sss^{-1}] - 2 \Tr B + \Tr[B^\top D^{-1} B \Sss] \;=\; -\sigma + \Tr[B^\top D^{-1} B \Sss],
\]
where $\Tr B = \Tr A - \Tr \Omega = \sigma$ since $\Omega$ is antisymmetric.

\paragraph{Computation of $\Tr[B^\top D^{-1} B \Sss]$.}
Expanding
\[
B^\top D^{-1} B \;=\; (A + \Omega) D^{-1} (A - \Omega) \;=\; A D^{-1} A - A D^{-1} \Omega + \Omega D^{-1} A - \Omega D^{-1} \Omega
\]
and tracing against $\Sss$ component-wise, with $D, A$ diagonal and $\Omega = \omega J$,
\begin{align*}
\Tr[A D^{-1} A \Sss] &\;=\; \frac{a_1^2}{d_1}\,s_{11} + \frac{a_2^2}{d_2}\,s_{22}, \\
\Tr[\Omega D^{-1} \Omega \Sss] &\;=\; -\omega^2 \Bigl(\frac{s_{11}}{d_2} + \frac{s_{22}}{d_1}\Bigr), \\
-\Tr[A D^{-1} \Omega \Sss] + \Tr[\Omega D^{-1} A \Sss] &\;=\; -2 \omega s_{12} \,\frac{\tau}{\mu} \;=\; -\frac{2 \omega^2 \tau^2}{\mu\,\sigma\,(\pi + \omega^2)},
\end{align*}
where the cross terms use $a_2/d_2 - a_1/d_1 = \tau/\mu$.

\paragraph{Algebraic cancellation.}
Substituting Eq.~\eqref{eq:S-Sss-aligned} and collecting over the common denominator $\mu \sigma (\pi + \omega^2)$,
\[
\Tr[B^\top D^{-1} B \Sss] \;=\; \frac{\sigma^2 \pi \mu + N_2\,\omega^2 + \delta^2 \omega^4}{\mu \sigma (\pi + \omega^2)},
\]
with $N_2 = \delta(a_1^2 d_2 + a_2^2 d_1) - 2\tau^2 + \sigma(a_2 d_1^2 + a_1 d_2^2)$. Expanding via
\begin{gather*}
\delta(a_1^2 d_2 + a_2^2 d_1) \;=\; \mu(a_1^2 + a_2^2) + a_1^2 d_2^2 + a_2^2 d_1^2, \\
\sigma(a_2 d_1^2 + a_1 d_2^2) \;=\; \pi(d_1^2 + d_2^2) + a_1^2 d_2^2 + a_2^2 d_1^2, \\
\tau^2 \;=\; a_2^2 d_1^2 - 2\mu\pi + a_1^2 d_2^2, \quad
\sigma^2 \;=\; (a_1^2 + a_2^2) + 2\pi, \quad
\delta^2 \;=\; (d_1^2 + d_2^2) + 2\mu,
\end{gather*}
the four-term sum simplifies to $N_2 = \delta^2 \pi + \mu \sigma^2$. Hence
\[
\Tr[B^\top D^{-1} B \Sss] \;=\; \frac{\sigma^2 \pi \mu + (\delta^2 \pi + \mu \sigma^2)\omega^2 + \delta^2 \omega^4}{\mu \sigma (\pi + \omega^2)} \;=\; \frac{(\pi + \omega^2)(\sigma^2 \mu + \delta^2 \omega^2)}{\mu \sigma (\pi + \omega^2)} \;=\; \sigma + \frac{\delta^2 \omega^2}{\mu \sigma}.
\]
The factor $\pi + \omega^2$ cancels exactly between numerator and denominator. Combining with $\Sigma = -\sigma + \Tr[B^\top D^{-1} B \Sss]$,
\begin{equation}\label{eq:S-Sigma-aligned}
\Sigma \;=\; \frac{\delta^2}{\mu \sigma}\,\omega^2 \;=\; \frac{(\Tr D)^2}{\det D \cdot \Tr A}\,\omega^2,
\end{equation}
exactly quadratic with no $\omega^4$ or higher correction. At full isotropy $a_1 = a_2 = a$, $d_1 = d_2 = D_0$, $\sigma = 2a$, $\mu = D_0^2$, $\delta = 2 D_0$, the prefactor evaluates to $(2 D_0)^2/(D_0^2 \cdot 2a) = 2/a$, recovering Eq.~\eqref{eq:S-Sigma-iso}. Eq.~\eqref{eq:S-Sigma-aligned} is reproduced to within $10^{-15}$ in direct Lyapunov computation across five independent aligned $(A, D)$ configurations.

\subsection{Heavy-tailed proof via the sub-Gaussian construction}\label{app:heavy-tailed}

The next four steps promote the heavy-tailed heuristic of the main text to a self-contained argument that the snapshot mutual information $I(x_0; x_t)$ is well-defined and finite for isotropic $\alpha$-stable driving with $\gamma < 2$, and is exactly independent of the drive $\omega$. The finiteness rests only on finite \emph{logarithmic} moments of the stable subordinator, and never on a finite second moment.

\paragraph{Step (a). Rotational invariance of the increments.}
An isotropic symmetric $\alpha$-stable vector $L \in \RR^n$ of stability index $\gamma$ admits the \emph{sub-Gaussian} representation~\cite{SamorodnitskyTaqqu1994}
\begin{equation}\label{eq:S-subGauss}
L \;\stackrel{d}{=}\; \sqrt{S}\,G,
\end{equation}
with $G \sim \mathcal N(0, I)$ a standard Gaussian on $\RR^n$ and $S \ge 0$ a positive $(\gamma/2)$-stable subordinator independent of $G$. For any fixed $R \in SO(n)$ one has $R G \stackrel{d}{=} G$ by rotational invariance of the standard Gaussian, while the scalar $S$ is unchanged, so $R L \stackrel{d}{=} L$. The increments of the corresponding L\'evy process are invariant in distribution under every $R \in SO(n)$, with the rotation acting only on the Gaussian factor.

\paragraph{Step (b). Conditioning on the subordinator yields Gaussian OU with log-determinant MI.}
Write $\mathcal S$ for the realization of the driving subordinator path on $[0, t]$. Conditional on $\mathcal S$, the noise is Gaussian, so the solution
\[
x_t \;=\; \Phi(t)\,x_0 \;+\; \int_0^t \Phi(t - s)\,dN_s
\]
makes $(x_0, x_t)$ jointly Gaussian. Conditioning additionally on the subordinator value $S_0$ that drives the stationary $x_0$, the conditional snapshot mutual information is the Gaussian log-determinant functional
\begin{equation}\label{eq:S-cond-MI}
I^{G}(\mathcal S, S_0) \;=\; -\tfrac{1}{2}\,\log\det\bigl(I - \Sigma_{x_0 x_t}\,\Sigma_{x_t}^{-1}\,\Sigma_{x_t x_0}\,\Sigma_{x_0}^{-1}\bigr) \;<\; \infty,
\end{equation}
finite for every realization. In the isotropic case Eq.~\eqref{eq:S-cond-MI} reduces to
\[
I^{G} \;=\; \tfrac{n}{2}\,\log(1 + r^2\,\rho),
\]
with $r = e^{-a t}$ the subordinator-independent correlation factor and $\rho = S_0 / S_{[0,t]}$ a positive scale ratio. The heavy tail enters only through $\rho$, never through $r$.

\paragraph{Step (c). Expectation over the subordinator: finiteness from log-moments, not from $\Sss$.}
The transition subordinator $\mathcal S$ on $[0, t]$ is independent of the stationary pair $(x_0, S_0)$, so $I\bigl((x_0, S_0);\,\mathcal S\bigr) = 0$ and in particular $I(x_0; \mathcal S \mid S_0) = 0$. Data processing followed by the chain rule then gives
\begin{equation}\label{eq:S-finiteness-chain}
I(x_0; x_t) \;\le\; I\bigl(x_0;\, (x_t, S_0, \mathcal S)\bigr) \;=\; I(x_0; S_0) \;+\; \E_{\mathcal S, S_0}\bigl[\,I^{G}\,\bigr],
\end{equation}
and we bound the two summands separately.

The conditional term uses the elementary inequality $\log(1 + x) \le \log 2 + |\log x|$ valid for all $x > 0$ --- for $x \ge 1$, $\log(1+x) \le \log(2x) = \log 2 + \log x = \log 2 + |\log x|$; for $0 < x < 1$, $\log(1+x) \le \log 2 \le \log 2 + |\log x|$ --- applied to $x = r^2 \rho$:
\begin{align*}
\E[I^{G}] \;&\le\; \tfrac{n}{2}\,\E[\,\log(1 + r^2 \rho)\,] \\
&\le\; \tfrac{n}{2}\Bigl(\log 2 + 2\,\log\tfrac{1}{r} + \E|\log S_0| + \E|\log S_{[0,t]}|\Bigr) \;<\; \infty,
\end{align*}
because a positive $(\gamma/2)$-stable subordinator has finite logarithmic moments, $\E|\log S| < \infty$, for every $\gamma \in (0, 2)$.

The scale term $I(x_0; S_0) = h(x_0) - h(x_0 \mid S_0)$ is finite by the same property. Under the sub-Gaussian representation, $x_0 \mid S_0 \sim \mathcal N(0, S_0\,\Sigma_0)$ with a fixed $\Sigma_0 \succ 0$, so
\[
h(x_0 \mid S_0) \;=\; \tfrac{n}{2}\log(2\pi e) \,+\, \tfrac{1}{2}\log\det\Sigma_0 \,+\, \tfrac{n}{2}\,\E[\log S_0] \;<\; \infty,
\]
and the marginal density $p_{x_0}$, a sub-Gaussian mixture, is bounded on compacts with isotropic tail $|x|^{-(n+\gamma)}$, so $-\int p_{x_0}\log p_{x_0}\,dx < \infty$.

The decisive point is that no step in either bound invokes $\E[S]$, which diverges for $\gamma/2 < 1$; equivalently, neither invokes a finite stationary covariance. The covariance $\Sss$ may be infinite, while the snapshot mutual information is not.

\paragraph{Step (d). Independence of $\omega$.}
Under full isotropy $A = a I$, $D = D_0 I$, pass to the co-rotating frame $y_t = R(-\omega t)\,x_t$ with $R(-\omega t) \in SO(n)$ the planar rotation by angle $-\omega t$. This removes the antisymmetric part of the drift and maps Eq.~\eqref{eq:S-OU} to a rotation-free isotropic OU--L\'evy process for $y_t$, by Step~(a) with the same isotropic $\alpha$-stable law as the lab-frame process. The frame map $(x_0, x_t) \mapsto (x_0, R(-\omega t) x_t)$ is a measurable bijection (the inverse is $(y_0, y_t) \mapsto (y_0, R(\omega t) y_t)$), so mutual information is preserved: $I(x_0; x_t) = I(y_0; y_t)$, since $I(\phi(X); \psi(Y)) = I(X; Y)$ for any measurable bijections $\phi, \psi$. The latter is computed from a process with no $\omega$ in it, so the snapshot mutual information is exactly $\omega$-independent for all $\gamma \in (0, 2)$, by the identical orthogonality mechanism that gave Eq.~\eqref{eq:S-MI-iso} in the Gaussian case.

\paragraph{Technical assumptions.}
(1) The transition subordinator on $[0, t]$ is independent of the stationary pair $(x_0, S_0)$, so that $I((x_0, S_0); \mathcal S) = 0$. A representation in which $x_0$ and the transition share a single global scale would collapse $I(x_0; S_0)$ and $I^G$ into a single quantity but leave the log-moment-controlled finiteness bound of Eq.~\eqref{eq:S-finiteness-chain} structurally unchanged. (2) Existence and uniqueness of the stationary isotropic $\alpha$-stable OU law and of the joint snapshot law, in the Samorodnitsky-Taqqu and Sato senses, are assumed and standard in the literature. (3) The mutual information is the relative entropy $D(P_{x_0, x_t}\,\|\,P_{x_0} \otimes P_{x_t})$, and mutual absolute continuity is automatic under linear OU--L\'evy dynamics. (4) Finite logarithmic moments are a sufficient, not necessary, condition for $\E[I^G] < \infty$; the bound used here is intentionally conservative. (5) The $\omega$-independence is established for the isotropic instance of $[A,D] = 0$. The aligned anisotropic instance $[A,D] = 0$ is handled by the same spectral argument as in the Gaussian case (Sec.~\ref{app:co-reflect}), conditioned on the subordinator path.

\paragraph{Scope of the construction.}
The same construction does not extend to the entropy production rate. Its absolute scale in the $\gamma < 2$ regime requires a tempering regularization of the $\alpha$-stable jump kernel, because the heavy-tailed steady state is dominated by large jumps and the path-measure relative entropy carries the nonlocal jump kernel of the generator rather than the Gaussian quadratic form Eq.~\eqref{eq:S-EP-def}. The parity structure $\Omega = 0 \Leftrightarrow \Sigma = 0$, monotonicity in $|\omega|$, and the small-$\omega$ scaling $\Sigma \propto \omega^2$ is a symmetry statement and is independent of any such regularization. The parity structure is therefore regularization-free; the absolute scale is not.

\section{Misalignment perturbation around $[A,D] = 0$}\label{app:perturb}

The parity violation away from the structural alignment is controllable. Write the diffusion matrix as
\begin{equation}\label{eq:S-perturb-setup}
D \;=\; D_0 + \eta V, \qquad [A, D_0] = 0,
\end{equation}
with $\eta > 0$ a small misalignment parameter and $V$ a fixed symmetric matrix not necessarily commuting with $A$. To first order in $\eta$, the stationary covariance solving the Lyapunov equation~\eqref{eq:S-lyap} expands as
\[
\Sss(\omega) \;=\; \Sss^{(0)}(\omega) + \eta\,\Sss^{(1)}(\omega) + O(\eta^2),
\]
with $\Sss^{(0)}$ the aligned-case solution that satisfies the co-reflection identity~\eqref{eq:S-Sss-conjugate}. The first-order correction obeys
\begin{equation}\label{eq:S-Sss1}
(A - \Omega)\,\Sss^{(1)} \;+\; \Sss^{(1)}\,(A - \Omega)^\top \;=\; 2\,V.
\end{equation}

Let $R$ denote the reflection introduced in the even-branch analysis (Lemma~\ref{lem:reflection}), which preserves $A$ and $D_0$ and reverses $\Omega$ at the aligned point. Conjugating both sides of Eq.~\eqref{eq:S-Sss1} by $R$ on the left and $R^\top$ on the right gives
\[
(A + \Omega)\,\bigl(R\,\Sss^{(1)} R^\top\bigr) \;+\; \bigl(R\,\Sss^{(1)} R^\top\bigr)\,(A + \Omega)^\top \;=\; 2\,R\,V\,R^\top,
\]
which is the Lyapunov equation at $-\omega$ with right-hand side $2\,R V R^\top$.

When $V$ commutes with $R$ (the $R$-symmetric piece of the perturbation), the right-hand side coincides with $2 V$, so $R\,\Sss^{(1)}(\omega)\,R^\top$ solves the Lyapunov equation at $-\omega$, and uniqueness gives
\[
\Sss^{(1)}(-\omega) \;=\; R\,\Sss^{(1)}(\omega)\,R^\top.
\]
The first-order correction is then parity-even, and every spectral or log-determinant functional of the joint covariance remains exactly even at first order in $\eta$.

When $V$ fails to commute with $R$, decompose $V = V_+ + V_-$ with
\[
V_+ \;=\; \tfrac{1}{2}\bigl(V + R V R^\top\bigr), \qquad V_- \;=\; \tfrac{1}{2}\bigl(V - R V R^\top\bigr),
\]
the $R$-symmetric and $R$-antisymmetric parts respectively. The defining properties $R V_\pm R^\top = \pm V_\pm$ use the involution $R^2 = I$ from Lemma~\ref{lem:reflection}. The $V_+$ piece is parity-preserving by the argument above; the $V_-$ piece sources a parity-breaking first-order correction $\Sss^{(1,-)}$ that satisfies
\[
R\,\Sss^{(1,-)}(\omega)\,R^\top \;=\; -\Sss^{(1,-)}(-\omega).
\]
The magnitude of $V_-$ is controlled by the commutator $\|[V, R]\| \le 2\,\|V\|$, and at the aligned point the natural choice of $R$ makes $\|V_-\|$ proportional to the misalignment $\|[A, D]\|/(\|A\|\,\|D\|)$ to leading order.

Consequently the parity violation of any spectral or log-determinant functional of the joint covariance of $(x_0, x_t)$ scales linearly with $\|[A, D]\|$ for small misalignment. This is the sharp two-sided characterization referred to in the main text, and is visible numerically in the misaligned curve of FIG.~\ref{fig:even-odd}(a) of the main text, which is no longer symmetric about $\omega = 0$. Higher-order corrections refine the law but do not alter its leading scaling.

\end{document}